\newtheorem{assumption}{Assumption}
\newtheorem{definition}{Definition}
\newtheorem{theorem}{Theorem}
\newcommand{\Agents}{\mathcal{I}}
\newcommand{\numAgents}{|\Agents|}
\newcommand{\A}{\mathcal{A}}
\newcommand{\B}{\mathcal{B}}
\newcommand{\Actions}{\mathcal{A}}
\newcommand{\numActions}{|\Actions|}
\newcommand{\Behaviors}{\mathcal{B}}
\newcommand{\numBehaviors}{|\Behaviors|}
\newcommand{\ones}{\mathbf{1}}
\newcommand{\zeroes}{\mathbf{0}}
\newcommand{\Ft}[1]{\mathcal{F}_{#1}}
\newcommand{\indep}{\rotatebox[origin=c]{90}{$\models$}}
\newcommand{\two}{\hspace{5px}}
\newcommand{\nn}{\nonumber}
\newcommand{\Reals}[1]{\mathbb{R}^{#1}}
\newcommand{\qlk}[1]{$\mathrm{QL}_{#1}$}
\newcommand{\QBR}[2]{\mathrm{QBR}(#1; #2)}
\newcommand{\Ex}[1]{\mathbb{E}\left(#1\right)}
\author[*]{Panos Toulis}
\author[**]{David C. Parkes}
\affil[*]{\small Econometrics and Statistics, University of Chicago, Booth School}
\affil[**]{\small Harvard University, School of Engineering and Applied Science}
\begin{document}
\title{Long-term causal effects via behavioral game theory}

\maketitle

\begin{abstract}
Planned experiments are the gold standard in reliably comparing the causal effect of switching from a baseline policy to a new policy.
One critical shortcoming of classical experimental methods, however, is that they typically do not take into account
the dynamic nature of response to policy changes.
For instance, in an experiment where we seek to understand the effects of a new ad pricing policy on auction revenue, agents may adapt their bidding in response to the experimental pricing changes.
Thus, causal effects of the new pricing policy after such adaptation period, the {\em long-term causal effects}, are
not captured by the classical methodology even though they clearly are more indicative of the value of the new policy.
 Here, we formalize a framework to define and estimate long-term causal effects of 
 policy changes in multiagent economies.
 Central to our approach is behavioral game theory, which we leverage 
 to formulate the ignorability assumptions that are necessary for causal inference. 
Under such assumptions we estimate long-term causal effects through a latent space approach, where a behavioral model of how agents act conditional on their latent behaviors is combined with a temporal model of how behaviors evolve over time.
\end{abstract}
\newpage

% Introduction.
\section{Introduction}
\label{section:intro}
A multiagent economy is comprised of agents interacting under specific economic rules.
A common problem of interest is to experimentally evaluate changes to such rules, also known as {\em treatments}, on an objective of interest.
For example, an online ad auction platform is a multiagent economy, 
where one problem is to estimate the effect of raising the reserve price on the platform's revenue.
Assessing causality of such effects is a challenging problem because there is a conceptual discrepancy between what needs to be estimated and what is available in the data, as illustrated in Figure~\ref{figure:tasks}.

What needs to be estimated is the {\em causal effect} of a policy change, which is 
defined as the difference between the objective value when the economy is treated, i.e., when {\em all} agents interact under the new rules, relative to when the same economy is in control, i.e., when {\em all} agents interact under the baseline rules. 
Such definition of causal effects is logically necessitated from the designer's task, which is to select either the treatment or the control policy based on their estimated revenues, and then apply such policy to all agents in the economy.
The {\em long-term causal effect} is the causal effect defined after the system has stabilized, and is more representative of the value of policy changes in dynamical systems. Thus, in Figure~\ref{figure:tasks} the long-term causal effect 
is the difference between the objective values at the top and bottom endpoints, marked as the ``targets of inference".

What is available in the experimental data, however, typically comes from designs such as the so-called A/B test, where we randomly assign {\em some} agents to the treated economy (new rules B) and the others to the control economy (baseline rules A), and then compare the outcomes. In Figure~\ref{figure:tasks} the experimental data are depicted as the solid time-series in the middle of the plot, marked as the ``observed data".
\begin{figure*}[t!]
\includegraphics[width=.9\textwidth]{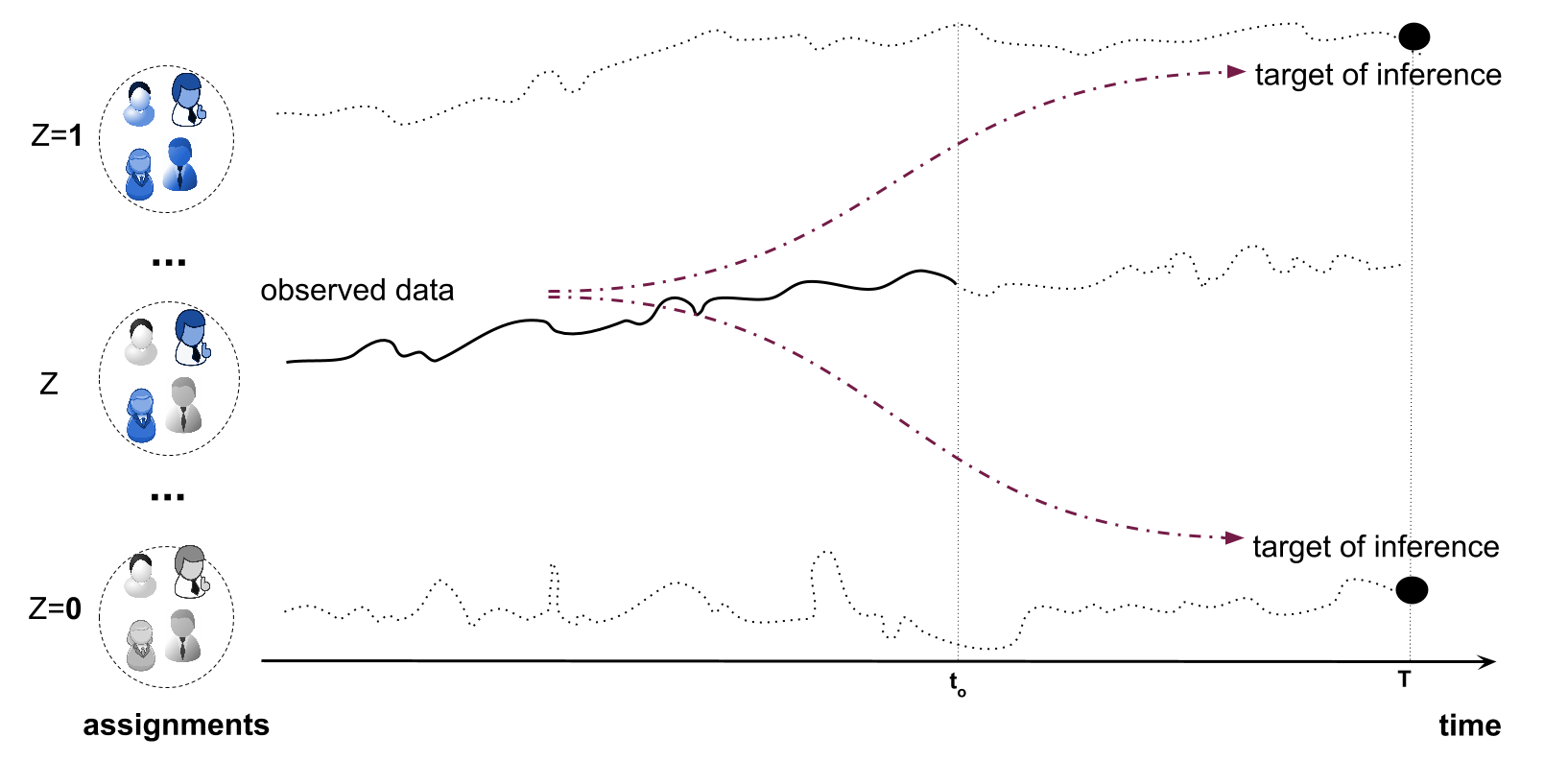}
\caption{\small The two inferential tasks for causal inference in multiagent economies.
First, infer agent actions across treatment assignments (y-axis), particularly, the assignment where all agents are in the treated economy (top assignment, $Z=1$), and
the assignment where all agents are in the control economy (bottom assignment, $Z=0$). 
Second, infer across time, from $t_0$ (last observation time) to long-term $T$. 
What we seek in order to evaluate the causal effect of the new treatment is the difference between the objectives (e.g., revenue) at the two inferential target endpoints.}
\label{figure:tasks}
\end{figure*}

Therefore the challenge in estimating long-term causal effects is that we generally need to perform two inferential tasks simultaneously, namely,
\begin{enumerate}[(i)]
    \item infer outcomes across possible experimental assignments (y-axis in Figure~\ref{figure:tasks}), and
    \item infer long-term outcomes from short-term experimental data (x-axis in Figure~\ref{figure:tasks}).
\end{enumerate}

The first task is commonly known as the ``fundamental problem of 
causal inference"~\citep{holland1986statistics, rubin2011causal} because it underscores the impossibility of observing in the same experiment the outcomes for {\em both} policy assignments that define the causal effect; i.e., that we cannot observe in the same experiment both the outcomes when all agents are treated and the outcomes when all agents are in control, the assignments of which are denoted by $Z=\textbf{1}$ and $Z=\textbf{0}$, respectively, in Figure~\ref{figure:tasks}.
% , 
% and implies that the ultimate goal of the experiment is to estimate the outcomes 
% when all experimental units, e.g., agents in the economy, are assigned to the new policy,
% and the outcomes when all units are assigned to the baseline policy.
% select either the treatment or control, which will later be adopted as the baseline economy for all agents.
% Therefore, the outcomes of interest correspond only to two  treatment assignments: outcomes measured when all agents  are assigned to treatment, and outcomes measured when all agents are assigned to control.
%
In fact the role of experimental design, as conceived by~\citet{fisher1935design}, is exactly
to quantify the uncertainty about such causal effects that cannot be observed due to the aforementioned fundamental problem, by using standard errors that can be observed in a carefully designed experiment. 

The second task, however, is unique to causal inference in dynamical systems, such as the multiagent economies that we study in this paper, and has received limited attention so far.
Here, we argue that it is crucial to study long-term causal effects, i.e., effects measured after the system has stabilized, because such effects are more representative of the value of policy changes. If our analysis focused only on the observed data part depicted in Figure~\ref{figure:tasks}, then policy evaluation would reflect transient effects that might differ substantially from the long-term effects.
For instance, raising the reserve price
in an auction might increase revenue in the short-term but as agents adapt their bids, or switch to another platform altogether, the long-term effect could be a net decrease in revenue~\citep{holland1991artificial}. 
%

%
% Central to the methodology we develop here is 
% to assume a {\em behavioral model} where agents adopt, at each time point, a latent {\em behavior} conditional on which the agent takes actions according to a distribution.
% %
% A {\em temporal model} is used for the evolution of agent behaviors, and the two models are fitted using the short-term experimental data. 
% The fitted model parameters are then used to predict the long-term agent actions, and
% thus estimate the long-term causal effect of interest, under certain ignorability assumptions that we state explicitly.
%
% However, our approach does not rely on a {\em particular} behavioral or temporal model. For example, in the application of Section \ref{section:application}, we use the quantal $k$-level (QLk), by Stahl and Wilson~\cite{stahl1994experimental}, as the behavioral model, and a lag-one vector autoregressive model, denoted by VAR(1), as the temporal model. More sophisticated models can certainly be used, however, these would still need to perform the inferential tasks of Figure \ref{figure:tasks}.

%% Related work
\subsection{Related work and our contributions}
\label{section:related}
There have been several important projects related to causal inference in multiagent economies.
For instance, \citet{ostrovsky2011} evaluated
the effects of an increase in the reserve price of Yahoo! ad auctions on revenue.
Auctions were randomly assigned to an
increased reserve price treatment, 
and the effect was estimated using  difference-in-differences (DID), which is a popular econometric method~\citep{card1993minimum,
  donald2007inference, ostrovsky2011}.
The DID method compares the difference in outcomes before and after the intervention for both the treated and control units ---the ad auctions in this experiment--- and then compares the two differences.
In relation to Figure~\ref{figure:tasks}, DID extrapolates across assignments (y-axis) and across time (x-axis)
by making a strong additivity assumption~\citep[Section 5.2]{abadie2005semiparametric, angrist2008mostly}, specifically, 
by assuming that the dependence of revenue on reserve price and time is additive.

In a structural approach, \citet{athey2008comparing} studied the effects of auction format (ascending versus
sealed bid) on competition for timber tracts. 
Their approach was to estimate agent valuations from observed data (agent bids) in one auction format and then  impute counterfactual bid distributions in the other auction format, under the assumption of equilibrium play in the observed data.
In relation to Figure~\ref{figure:tasks}, their approach extrapolates across assignments by assuming that agent individual valuations for tracts are independent of the treatment assignment,
and extrapolates across time by assuming that the observed agent bids are already in equilibrium. Similar approaches are followed in econometrics for estimation 
of general equilibrium effects~\citep{heckman1998general, heckman05}.

In a causal graph approach~\citep{pearl2000causality}, \citet{bottou2012} studied effects of changes in the algorithm that scores Bing ads on the ad platform's revenue.
Their approach was to create a directed acyclic graph (DAG) among related variables, such as queries, bids, and prices.  Through a ``Causal Markov'' assumption they could predict counterfactuals for  revenue, using only data from the control economy (observational study).
In relation to Figure~\ref{figure:tasks}, their approach is non-experimental and extrapolates across assignments and across time by assuming a directed acyclic graph (DAG) as the correct data model, which is also assumed to be stable with respect to treatment assignment, and by estimating counterfactuals through the fitted model.
% that the underlying DAG is the correct data model, and does not depend on the treatment assignment.
% This assumption of a well-defined DAG is crucial in transferring the estimates from the observed data to counterfactuals.
%
%

Our work is different from prior work
because it takes into account the short-term aspect of experimental data to evaluate long-term causal effects, which is the key conceptual and practical challenge that arises in empirical applications.
In contrast, classical econometric methods, such as DID, assume strong linear trends from short-term to long-term, whereas 
structural approaches typically assume that the experimental data are already long-term as they  are observed in equilibrium. We refer the reader to Sections 2 and 3 of the supplement for more detailed comparisons.

In summary, our key contribution is 
that we develop a formal framework that (i) articulates the distinction between short-term and long-term causal effects, (ii) leverages behavioral game-theoretic models for causal analysis of multiagent economies, and (iiii) explicates theory that enables valid inference of long-term causal effects.

%%
%%  Model and definitions.
\section{Definitions}
\label{section:model}
Consider a set of agents $\Agents$ and a set
of actions $\Actions$, indexed by $i$ and $a$, respectively. The experiment designer wants to run an experiment to evaluate a new policy against the baseline policy relative to an objective.
In the experiment each agent is assigned to one policy, and the experimenter observes how agents act over time.
Formally, 
let $Z=(Z_i)$ be the $\numAgents \times 1$ assignment vector
where $Z_i=1$ denotes that agent $i$ is assigned to the new policy, and $Z_i=0$ denotes that $i$ is assigned to the baseline policy; as a shorthand, $Z=\ones$ denotes that all agents are assigned to the new policy, and $Z=\zeroes$ denotes that all agents are assigned to the baseline policy,  
where $\ones$, $\zeroes$ generally denote an appropriately-sized vector of ones and zeroes, respectively.
In the simplest case, the experiment is an A/B test, where $Z$ is uniformly random on $\{0,1\}^{\numAgents}$ subject to $\sum_i Z_i=|\Agents|/2$.

After the initial assignment $Z$ agents play actions at discrete time points from $t=0$ to $t=t_0$.
Let $A_i(t; Z) \in \Actions$ be the random variable that denotes the action of agent $i$ at time $t$ under assignment $Z$.
The {\em population action} 
$\alpha_j(t; Z) \in \Delta^{\numActions}$, where $\Delta^p$ denotes the $p$-dimensional simplex, is the frequency of actions at time $t$ under assignment $Z$ of agents that were assigned to game $j$; for example, assuming two actions $\Actions=\{a_1, a_2\}$, then $\alpha_{1}(0; Z) = [0.2, 0.8]$ denotes that, under assignment $Z$, $20\%$ of agents assigned to the new policy play action $a_1$ at $t=0$, while the rest play $a_2$.
%
% i.e., 
%if $\Agents_j = \{\i \in \Agents : Z_i=j\}$ is the 
%set of agents assigned to game $\j$, then the
%$\a$th element of $\aggAction$ is equal to 
%$\sum_{\i \in \Agents_j} \mathbb{I}\{\poAction=a\} / |\Agents_j|$. As before, 
%$\aggAgentAction{\j}{\ast}{Z}$
%denotes the $\numActions \times t_0$ matrix where 
%$\aggAgentAction{\j}{k}{Z}$ is the $(k+1)$th column.
%
%\subsection{Long-term causal effect}
We assume that the objective value for the experimenter depends on the population action, in a similar way that, say, auction revenue depends on agents' aggregate bidding. The objective value in policy $j$ at time $t$ under assignment $Z$ is denoted by $R(\alpha_j(t; Z))$, 
where $R : \Delta^{\numActions} \to \mathbb{R}$.
For instance, suppose in the previous example that $a_1$ and $a_2$ produce revenue $\$10$ and $-\$2$, respectively, each time they are played, then $R$ is linear and $R([.2, .8]) = 0.2\cdot\$10 - 0.8\cdot\$2 = \$0.4$.
% It can immediately be seen that the objective depends on agent actions, in a similar way that, say, revenue depends on agent bids. 
%
\begin{definition}
\label{definition:ce}
The average causal effect on objective $R$ at time $t$ of the new policy relative to the baseline is denoted by $\mathrm{CE}(t)$ and is defined as
\begin{align}
\label{eq:ce}
\mathrm{CE}(t) & = \Ex{R(\alpha_1(t; \ones)) - R(\alpha_0(t; \zeroes))}.
\end{align}
\end{definition}

Suppose that $(t_0, T]$ is the time interval required for the economy to adapt to the experimental conditions. The exact definition of $T$ is important but we defer this discussion for Section~\ref{section:discussion}.
The designer concludes that the new policy is better than the baseline if $\mathrm{CE}(T) > 0$.
Thus, $\mathrm{CE}(T)$ is the 
{\em long-term average causal effect}
and  is a function 
of two objective values, $R(\alpha_1(T; \ones))$ and $R(\alpha_0(T; \zeroes))$, which correspond to the two inferential target endpoints 
in Figure~\ref{figure:tasks}.
Neither value is observed in the experiment because agents are randomly split between policies, and their actions are observed only for the short-term period $[0, t_0]$. Thus we need to (i) extrapolate across assignments by pivoting from the observed assignment to the counterfactuals $Z=\ones$ and $Z=\zeroes$;
(ii) extrapolate across time from the short-term data $[0, t_0]$ to the long-term $t=T$. 
We perform these two extrapolations based on a latent space approach, which is described next.

\subsection{Behavioral and temporal models}
We assume a latent behavioral model of how agents select actions, inspired by models from behavioral game theory.
The behavioral model is used to predict agent actions conditional on agent behaviors, and is combined with a temporal model to predict behaviors in the long-term.
The two models are ultimately used to 
estimate agent actions in the long-term, 
and thus estimate long-term causal effects.
As the choice of the latent space is not unique, in Section~\ref{section:discussion} we discuss why we chose to use behavioral models from game theory.

Let
$B_i(t; Z)$ denote the behavior that agent $i$ adopts at time $t$ under experimental assignment $Z$.
The following assumption puts a constraints on the space of possible behaviors that agents can adopt, 
which will simplify the subsequent analysis.
\begin{assumption}[Finite set of possible behaviors]
\label{assumption:finite_behaviors}
There is a fixed and finite set of behaviors $\Behaviors$ 
such that for every time $t$, assignment $Z$ and agent $i$, it holds that $B_i(t; Z) \in \Behaviors$; i.e., every agent can only adopt a behavior from $\Behaviors$.
\end{assumption}

The set of possible behaviors $\Behaviors$ essentially defines a $\numBehaviors \times \numActions$ collection of probabilities that is sufficient to compute the likelihood of actions played conditional on adopted behavior---we refer to such collection as the behavioral model.

\begin{definition}[Behavioral model]
\label{definition:behavioral}
The behavioral model for policy $j$ defined by set $\Behaviors$ of behaviors is the collection of probabilities
\begin{align}
P(A_i(t; Z)=a|B_i(t; Z)=b, G_j),\nonumber
\end{align}
for every action $a\in\Actions$ and every behavior $b\in\Behaviors$,
where $G_j$ denotes the characteristics of policy $j$.
\end{definition}

As an example, a non-sophisticated behavior $b_0$ could imply that
$P(A_i(t; Z)=a|b_0, G_j) = 1/\numActions$, i.e., that the agent adopting $b_0$ simply plays actions at random.
Conditioning on policy $j$ in Definition~\ref{definition:behavioral} allows an agent to choose its actions based on expected payoffs, which
depend on the policy characteristics. For instance, in the application
of Section~\ref{section:application} we consider a behavioral model
where an agent picks
actions in a two-person game according to expected payoffs calculated from the
game-specific payoff matrix---in that case $G_j$ is simply the payoff matrix of game $j$.

The {\em population behavior} 
$\beta_j(t; Z) \in \Delta^{\numBehaviors}$ denotes the frequency at time $t$ under assignment $Z$ of the adopted  behaviors of agents assigned to policy $j$.
Let $\Ft{t}$ denote the entire history of population behaviors in the experiment up to time $t$. 
A temporal model of behaviors is defined as follows.

\begin{definition}[Temporal model]
For an experimental assignment $Z$ a temporal model for policy $j$ is a collection of parameters $\phi_j(Z), \psi_j(Z)$, and densities $(\pi, f)$, such that for all $t$,
\begin{align}
\label{eq:temporal}
\beta_j(0; Z) & \sim \pi(\cdot; \phi_j(Z)),
\nn \\
\beta_j(t; Z) |  \two \Ft{t-1}, G_j
 & \sim f(\cdot | \psi_j(Z), \Ft{t-1}). 
\end{align}
\end{definition}

A temporal model defines the distribution of population behavior as a time-series with a Markovian structure subject to $\pi$ and $f$ being stable with respect to $Z$. In other words, regardless of how agents are assigned to games, the population behavior in the game will evolve according to a fixed model described by $f$ and $\pi$. The model parameters $\phi, \psi$ may still depend on the treatment assignment $Z$.
%

%% Main assumption.
\section{Estimation of long-term causal effects}
\label{section:theory}

%The behavioral and temporal models define, essentially, a hidden Markov model over time of agent actions. 
% Conceptually, to predict the long-term actions at time $T$ in game $j$ under assignment $Z$, we (i) estimate the latent behavior at $\aggAgentBehavior{j}{t_0}{Z}$ using the forward algorithm \citep{bishop2006pattern}, then (ii) estimate $\aggAgentBehavior{j}{T}{Z}$ from $\aggAgentBehavior{j}{t_0}{Z}$ and the temporal model, and finally (iii) estimate $\aggAgentAction{j}{T}{Z}$ conditional on $\aggAgentBehavior{j}{T}{Z}$.
%
% Thus, so far we have only established inference across time (i.e., along the x-axis in Figure \ref{figure:tasks}), but not inference across assignments (y-axis in Figure \ref{figure:tasks}).
%
Here we develop the assumptions that are necessary for inference of long-term causal effects.
%%   Main assumption
%% Need a name --- Ignorability ?
\begin{assumption}[Stability of initial behaviors]
\label{assumption:initial}
Let $\rho_Z = \sum_{i\in\Agents} Z_i / \numAgents$ be the proportion of agents assigned to the new policy under assignment $Z$.
Then, for every possible $Z$,
\begin{align}
\label{eq:invariant}
\rho_Z \beta_1(0; Z) + (1-\rho_Z) \beta_0(0; Z) = \beta^{(0)},
\end{align}
where $\beta^{(0)}$ is a fixed population behavior invariant to $Z$.
\end{assumption}

\begin{assumption}[Behavioral ignorability]
\label{assumption:ignorability}
The assignment is independent of population behavior at time $t$,
conditional on policy and behavioral history up to $t$; i.e., 
for every $t>0$ and policy $j$,
\begin{align}
Z \two \indep \two \beta_j(t; Z) \two | \two \Ft{t-1}, G_j. \nn
\end{align}
\end{assumption}

{\em Remarks.} 
% Since $\beta_0(t; \ones) = \zeroes$, by definition, Assumption~\ref{assumption:initial} implies that 
% $\beta_1(0; \ones) = \beta^{(0)}$; similarly, $\beta_0(t; \zeroes)  = \beta^{(0)}$. 
Assumption~\ref{assumption:initial} implies that the agents do not anticipate the assignment $Z$ as they ``have made up their minds" to adopt a population behavior $\beta^{(0)}$ before the experiment.
It follows that the population behavior $\beta_1(t; Z)$ marginally corresponds to $\rho_Z \numAgents$ draws from $\numBehaviors$ bins of total size $\numAgents \beta^{(0)}$. The bin selection probabilities at every draw depend on the experimental design; for instance, in an A/B experiment where $\rho_Z=0.5$ the population behavior at $t=0$ can be sampled uniformly such that $\beta_1(0; Z) + \beta_0(0; Z) = 2 \beta^{(0)}$.
Quantities such as that in Eq.~\eqref{eq:invariant}
are crucial in causal inference because they can be used as a pivot for extrapolation across assignments.
%that are stable under the treatment assignment $Z$ 
%

Assumption \ref{assumption:ignorability} states that the treatment assignment 
does not add information about the population behavior at time $t$, if 
we already know the full behavioral history of up to $t$, and the policy which agents are assigned to; hence, the treatment assignment is conditionally {\em ignorable}.
This ignorability assumption  precludes, for instance, an agent  adopting a different 
behavior depending on whether it was assigned with friends or foes in the experiment.

%For an intuition why Assumptions \ref{assumption:initial} and \ref{assumption:ignorability} enable inference of the long-term causal effect note that Assumption~\ref{assumption:initial} implies that the prior parameter $\phi_Z^j$ does not depend on $Z$, since agents start with a fixed aggregate behavior $\bo_j$, and they are randomly assigned to games. 
% Furthermore, Assumption~\ref{assumption:ignorability} implies that $\psi_Z^j$ also does not depend on $Z$ in Eq.~\eqref{eq:temporal} because the treatment assignment is independent of the aggregate behavior at $t$, conditional on the history and the game.
%
Algorithm~\ref{estimation_algo} is the main methodological contribution of this paper.
It is a Bayesian procedure as it puts priors on parameters $\phi, \psi$ of the temporal model, and then marginalizes these parameters out.

\newcommand{\algoComment}[1]{
{\em \hspace{1px} \#\textcolor{gray}{#1}}}
\begin{algorithm}[h!]
\setstretch{2}
\caption{Estimation of long-term causal effects.\newline
{\bf Input:} $Z, T, \A, \B, G_1, G_0, \mathcal{D}_1 = \{a_1(t; Z) : t =0, \ldots, t_0\}, 
\mathcal{D}_0 = \{a_0(t; Z) :  t =0, \ldots, t_0\}$ \newline
{\bf Output:} Estimate of long-term causal effect $\mathrm{CE}(T)$ in Eq.~\eqref{eq:ce}.
}
\label{estimation_algo}
%-----------------------------------------------------------------------------------------------------------------------
\begin{algorithmic}[1]
\State By Assumption~\ref{assumption:ignorability}, define 
$\phi_j \equiv \phi_j(Z)$, $\psi_j \equiv \psi_j(Z)$.
\State Set $\mu_1 \leftarrow \zeroes$ and
$\mu_0 \leftarrow \zeroes$, both of size $\numActions$;
set $\nu_0=\nu_1=0$.
\For{$iter=1, 2, \ldots$}
\State For $j=0, 1$, sample $\phi_j, \psi_j$ from prior, 
and sample $\beta_j(0; Z)$ conditional on $\phi_j$.
\State Calculate $\beta^{(0)} = \rho_Z \beta_1(0; Z) + (1-\rho_Z) \beta_0(0; Z)$.
\For{$j= 0, 1$}
\State Set $\beta_j(0; j\ones) = \beta^{(0)}$.
% algoComment{pivot to $Z=\ones$ and $Z=\zeroes$.}
\State Sample $B_j = \{\beta_j(t; j\ones) : t=0,\ldots,T\}$ given $\psi_j$ and $\beta_j(0, j\ones)$.
\algoComment{temporal model}
\State Sample $\alpha_j(T; j\ones)$ conditional on $\beta_j(T; j\ones)$. \algoComment{behavioral model}
% \State Calculate $w = P\left(\mathcal{D}_j | G_j, \{\beta_j(t; j\ones) :  t\in[0, t_0]\}\right)$.
\State  Set $\mu_j \leftarrow \mu_j + P\left(\mathcal{D}_j |  B_j, G_j\right) \cdot R(\alpha_j(T; j\ones))$. 
\State  Set $\nu_j \leftarrow \nu_j + P\left(\mathcal{D}_j | B_j, G_j\right)$.
\EndFor
\EndFor
\State Return estimate
$
\widehat{\mathrm{CE}}(T) = \mu_1 / \nu_1 - \mu_0 / \nu_0.
$
\end{algorithmic}
\end{algorithm}

%% FIGURE Of estimation
% (SAVE SPACE)
%A weighted sum of these sampled actions is maintained in the vector $\mu_j$, which  is shown in Theorem~\ref{theorem:main} to be an unbiased estimate  of $\alpha_j(T; j\ones)$, i.e, the long-term population action when  all agents are assigned to policy $j$.
% Finally, these action estimates can be used to estimate 
% Thus, estimation of the long-term causal effect can be done by (i) estimating the initial  aggregate behavior at $t=0$ under the realized assignment in the experiment, (ii) estimating  the aggregate behavior  t $t=0$ under the counterfactual assignment where all agents are assigned to a single game, and (iii) estimating the aggregate behavior and action at $t=T$ through the temporal and behavioral models. This estimation process is illustrated by the path (A)-(B)-(C)-(D)-(E) on the figure of Section 5 in the supplement. A concrete implementation of Algorithm~\ref{estimation_algo} is described in Section~\ref{section:application} of this paper.

% It is important to note that Algorithm 1 can either be viewed as producing point estimates, or as producing posterior distributions conditional on the observed data. For example, in Step 2 we can either estimate the initial aggregate behavior using the maximum-likelihood estimate given the observed data, or take its posterior distribution conditional on  these actions, assuming appropriate priors. The same holds for all subsequent steps. 
%

\begin{theorem}[Estimation of long-term causal effects]
\label{theorem:main}
  Suppose that behaviors evolve according to a known temporal model, and actions are 
  distributed conditionally on behaviors according to a known behavioral model.
  Suppose that Assumptions~\ref{assumption:finite_behaviors}, \ref{assumption:initial} and \ref{assumption:ignorability} hold for such models.
  Then, for every policy $j\in\{0, 1\}$ as the iterations of Algorithm~\ref{estimation_algo} increase,
  $
  \mu_j/\nu_j \to \Ex{R(\alpha_j(T; j\ones)) | \mathcal{D}_j}.
  $
  The output $\widehat{\mathrm{CE}}(T)$ of Algorithm~\ref{estimation_algo} asymptotically estimates the long-term causal effect, i.e.,
  $$
  \mathbb{E}(\widehat{\mathrm{CE}}(T))
   = \Ex{R(\alpha_1(T; \ones)) -R(\alpha_0(T; \zeroes))} \equiv \mathrm{CE}(T).
  $$
  %i.e., Algorithm~\ref{estimation_algo} unbiasedly estimates the long-term causal effect of interest in Eq.~\eqref{eq:ce}.
\end{theorem}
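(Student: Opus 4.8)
The plan is to read Algorithm~\ref{estimation_algo} as a self-normalized importance sampler whose proposal is the prior-predictive law and whose target is the posterior, then to combine a strong law of large numbers with Bayes' rule and the law of iterated expectations. \emph{First}, I would fix a policy $j\in\{0,1\}$ and view one pass of the inner \texttt{for} loop as drawing an i.i.d.\ sample $\theta^{(k)}=(\phi_j,\psi_j,\beta_j(0;Z),B_j,\alpha_j(T;j\ones))^{(k)}$ from the generative law $q$ spelled out in lines~4--9 (parameters from the prior, $\beta_j(0;Z)$ from $\pi$, the aggregate $\beta^{(0)}$ from Eq.~\eqref{eq:invariant}, the trajectory $B_j$ from the temporal model, and $\alpha_j(T;j\ones)$ from the behavioral model), with weight $w^{(k)}=P(\mathcal{D}_j\mid B_j^{(k)},G_j)$ and value $v^{(k)}=R(\alpha_j(T;j\ones)^{(k)})$. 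Since $\Actions$ and $\Behaviors$ are finite (Assumption~\ref{assumption:finite_behaviors}), the behavioral-model weights lie in $[0,1]$, and assuming $R$ is integrable on $\Delta^{\numActions}$ (it is, e.g., when $R$ is continuous, hence bounded) the products $w\,v$ and $w$ are $q$-integrable. The strong law of large numbers then gives $\mu_j/K\to\mathbb{E}_q[w\,v]$ and $\nu_j/K\to\mathbb{E}_q[w]=P(\mathcal{D}_j\mid G_j)>0$ almost surely, so $\mu_j/\nu_j$ converges to $\mathbb{E}_q[w\,v]/\mathbb{E}_q[w]$.

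\emph{Second}, I would identify this ratio with a posterior expectation. Reweighting the prior-predictive $q$ by the likelihood $w$ and renormalizing is exactly Bayes' rule (using that, given the full trajectory $B_j$, the short-run data $\mathcal{D}_j$ and the long-run outcome $\alpha_j(T;j\ones)$ are conditionally independent), so $\mathbb{E}_q[w\,v]/\mathbb{E}_q[w]=\mathbb{E}(R(\alpha_j(T;j\ones))\mid\mathcal{D}_j)$---\emph{provided} that $q$ together with $P(\mathcal{D}_j\mid B_j,G_j)$ is the correct joint law of the observed data $\mathcal{D}_j$ (collected under the experimental assignment $Z$) and the counterfactual outcome $\alpha_j(T;j\ones)$ (generated under $j\ones$). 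Establishing this is where the structural assumptions do their work: Definition~\ref{definition:behavioral} makes the action-given-behavior law a function of behavior and policy only, so the data likelihood has the same form whether the underlying behavior trajectory arose under $Z$ or under $j\ones$; Assumption~\ref{assumption:ignorability} removes the assignment from the temporal kernel $f$, so the trajectory that generated $\mathcal{D}_j$ and the simulated counterfactual $B_j$ obey the same dynamics (this is the content of the identification $\psi_j\equiv\psi_j(Z)$ made at the outset of Algorithm~\ref{estimation_algo}); and, since $\rho_{j\ones}\in\{0,1\}$, Assumption~\ref{assumption:initial} forces $\beta_j(0;j\ones)=\beta^{(0)}$, matching the initialization in lines~4--7 reconstructed from the experiment. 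With these in hand the limit is the genuine posterior mean, which is the first display of the theorem.

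\emph{Third}, I would take expectations over the experimental data under the assumed model (prior over $\phi,\psi$ included). By the second step, as the iteration count grows the algorithm's output converges to $\mathbb{E}(R(\alpha_1(T;\ones))\mid\mathcal{D}_1)-\mathbb{E}(R(\alpha_0(T;\zeroes))\mid\mathcal{D}_0)$, and since the inner conditional laws coincide with the true joint law, the tower property yields $\mathbb{E}\,\widehat{\mathrm{CE}}(T)=\mathbb{E}\,R(\alpha_1(T;\ones))-\mathbb{E}\,R(\alpha_0(T;\zeroes))$, which is precisely $\mathrm{CE}(T)$ by Definition~\ref{definition:ce}; boundedness of $R$ and dominated convergence justify interchanging the iteration limit with this expectation if one wants the two statements combined.

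The main obstacle is the second step: verifying that the importance-sampling target implicitly defined by the algorithm---which scores the counterfactual trajectory under $j\ones$ against data observed under the experimental $Z$---really is the joint distribution of $(\mathcal{D}_j,\alpha_j(T;j\ones))$. Everything else is a routine self-normalized importance-sampling law of large numbers together with iterated expectations; the conceptual content, and the only place Assumptions~\ref{assumption:initial} and~\ref{assumption:ignorability} are genuinely used, is in pinning down that joint law and thereby licensing the extrapolation across assignments and across time.
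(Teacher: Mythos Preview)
Your argument is correct and follows essentially the same route as the paper: recognize lines~4--11 as a self-normalized Monte Carlo estimator of a posterior mean, use the conditional independence $P(\mathcal{D}_j\mid\alpha,\omega)=P(\mathcal{D}_j\mid\omega)$ together with Bayes' rule to identify $\lim\mu_j/\nu_j$ with $\Ex{R(\alpha_j(T;j\ones))\mid\mathcal{D}_j}$, and then apply the tower property. If anything, your treatment is more careful than the paper's own proof about where Assumptions~\ref{assumption:initial} and~\ref{assumption:ignorability} actually enter (namely, in certifying that the algorithm's counterfactual trajectory under $j\ones$ has the same joint law with $\mathcal{D}_j$ as the true data-generating process under $Z$), a point the paper's appendix leaves largely implicit.
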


{\em Remarks.}
Theorem~\ref{theorem:main} shows that $\widehat{\mathrm{CE}}(T)$ consistently estimates the long-term causal effect in Eq.~\eqref{eq:ce}.
We note that it is also possible to derive the variance of this estimator with respect to the randomization distribution of assignment $Z$. To do so we first create a set of assignments $\mathcal{Z}$ by repeatedly sampling $Z$ according to the experimental design. Then we adapt  Algorithm~\ref{estimation_algo} so that (i) Step 4 is removed;
(ii) in Step 5, $\beta^{(0)}$ is sampled from its posterior distribution conditional on observed data, which can be obtained from the original Algorithm~\ref{estimation_algo}.
The empirical variance of the outputs over $\mathcal{Z}$ from the adapted algorithm estimates the variance of the output $\widehat{\mathrm{CE}}(T)$ of the original algorithm. We leave the full characterization of this variance estimation procedure for future work.

%% (SAVE SPACE)
% Theorem 1 shows the properties of Algorithm 1 for estimation of long-term
% causal effects---the proof is given in the supplement.
% The key property is that the quantity $\mu_j/\ones'\mu_j$ 
% unbiasedly estimates the long-term  population action in policy $j$ when all agents are assigned to policy $j$, conditional on the short-term  population actions observed up to $t_0$.
%The proof, given in Sections 1 and 2 of the supplement, relies on being able to identify the parameters $(\phi^j, \psi^j)$, for each game $j$.  This is possible  when there is enough available data (e.g., through multiple experiments), or when the observation period $[0, t_0]$ increases.
% The proof follows the Bayesian interpretation of Algorithm 1 and is given in the supplementary material.
%
As Theorem 1 relies on Assumptions~\ref{assumption:initial} and \ref{assumption:ignorability}, it is worth noting that the assumptions may be hard but not impossible to test in practice. For example, one idea to test Assumption~\ref{assumption:ignorability} is to use data from multiple experiments on a single game $j$. If fitting the temporal model~\eqref{eq:temporal} on such data yields parameter estimates $(\phi_j(Z), \psi_j(Z))$ that depend on experimental assignment $Z$, then Assumption~\ref{assumption:ignorability} would be unjustified. A similar test could be used for Assumption~\ref{assumption:initial} as well.

\subsection{Discussion}
\label{section:discussion}
Methodologically, our approach is aligned with the idea that for long-term causal effects we need a model for outcomes that leverages structural information pertaining to how outcomes are generated and how they evolve. In our application such structural information is the microeconomic information that dictates what agent behaviors are successful in a given policy and how these behaviors evolve over time.

In particular, Step 1 in the algorithm relies on Assumptions~\ref{assumption:initial} and \ref{assumption:ignorability} to infer that model parameters, $\phi_j, \psi_j$ are stable with respect to treatment assignment.
Step 5 of the algorithm is the key estimation pivot, which 
uses Assumption~\ref{assumption:initial} to extrapolate from the experimental assignment $Z$ to the counterfactual assignments 
$Z=\ones$ and $Z=\zeroes$, as required in our problem. 
Having pivoted to such counterfactual assignment, it is then possible to use the temporal model parameters $\psi_j$, which are unaffected by the pivot under Assumption~\ref{assumption:ignorability}, to 
sample population behaviors up to long-term $T$, and subsequently sample agent actions at $T$ (Steps 8 and 9). 

Thus, a lot of burden is placed on the behavioral game-theoretic model to predict agent actions, and the accuracy of such models is still not settled~\citep{hahn2015bayesian}. However, it does not seem necessary that such prediction is completely accurate, but rather that the behavioral models can pull relevant information from data that would otherwise be inaccessible without game theory, thereby improving over classical methods. A formal assessment of such improvement, e.g., using information theory, is open for future work. 
An empirical assessment can be supported by the extensive literature in behavioral game
theory~\citep{stahl1994experimental, mckelvey95}, which has been successful in predicting human actions in real-world experiments~\citep{wright2010beyond}.
%Such combination of behavioral game theory with causal inference is novel and crucial, since better behavioral models will generally support better causal inference.

% A question that now opens up is what makes our approach causal and not standard machine learning, since it relies so heavily on statistical behavioral models? 
% %
% The answer to this question is twofold.
% First, Algorithm~\ref{estimation_algo} estimates the quantity in Eq.~\eqref{eq:ce}, which is defined in terms of counterfactuals and therefore \textit{is} a causal effect, by definition.
% Second, Algorithm~\ref{estimation_algo} takes into account the randomization from the experimental design, which is crucial in statistical causal inference.
% Specifically, Step 5 of Algorithm~\ref{estimation_algo} encodes how the initial behaviors are generated  based on the experimental design and on Assumption~\ref{assumption:initial}, which posits that agents start with a fixed but unknown population behavior.
% In our setting the design is a simple A/B test but more elaborate designs could be handled by appropriately adapting Assumption~\ref{assumption:initial} and Step 5. For instance, if we had a blocked experiment we could extend Assumption~\ref{assumption:initial} to have different initial behaviors $\beta^{(0)}$ per block.

Another limitation of our approach is Assumption~\ref{assumption:finite_behaviors}, which posits that there is a finite set of predefined behaviors. A nonparametric approach where behaviors are estimated on-the-fly might do better.
In addition, the long-term horizon, $T$, also needs to be defined {\em a priori}.
We should be careful how $T$ interferes with the temporal model since such a model implies a time $T'$ at which population behavior reaches stationarity. Thus if $T' \le T$ we implicitly assume that the long-term causal effect of interest pertains to a stationary regime (e.g., Nash equilibrium), 
but if $T' > T$ we assume that the effect pertains to a transient regime, and therefore the policy evaluation might be misguided.

\section{Application on data from a behavioral experiment}
\label{section:application}
In this section, we apply our methodology
to experimental data from~\citet{rapoport1992mixed}, as reported by~\citet{mckelvey95}.
The experiment consisted of a series of zero-sum two-agent games, and aimed at examining the hypothesis that human players play according to minimax solutions of the game, the so-called 
 minimax hypothesis initially suggested by~\citet{von1944theory}.
Here we repurpose the data in a slightly artificial way, including how we construct the designer's objective. This enables a suitable demonstration of our approach.

Each game in the experiment was a simultaneous-move game with five discrete 
actions for the row player and five actions for the column player. The structure of the payoff matrix, given in the supplement in Table 1, 
is parametrized by two values, namely $W$ and $L$;
the experiment used two
different versions of payoff matrices, 
corresponding to payments by the row agent to the column agent  
when the row agent {\em won} ($W$), or {\em lost} ($L$):
modulo a scaling factor~\citet{rapoport1992mixed} used $(W, L) =(\$10, -\$6)$ for game 
0 and $(W, L) =(\$15, -\$1)$ for game 1. 
% For example, if the row agent picks action $\a_1$ and the column agent plays $\a_3'$ in game 0,  then the row agent has to pay $\$6$ to the column agent.

Forty agents, $\Agents=\{1, 2, \ldots, 40\}$, were randomized to
one game design (20 agents per game), and each agent
played once as row and once as column, matched against two
different agents. 
Every match-up between a pair of
agents lasted for two periods of 60 rounds, 
with each round consisting of a selection of an action from
each agent and a payment. Thus, each agent played
for four periods and 240 rounds in total.
If $Z$ is the entire assignment vector of length 40, $Z_i=1$ means that agent $i$ was assigned to 
game 1 with payoff matrix $(W, L) =(\$15, -\$1)$ and $Z_i=0$ means that $i$ was assigned to 
game 0 with payoff matrix $(W, L) =(\$10, -\$6)$.

In adapting the data, we take advantage of the randomization in the experiment, 
and ask a question in regard to long-term causal effects. 
In particular, assuming that agents pay a fee for each action taken, 
which accounts for the revenue of the game, we ask the following question: 

\begin{quote}
What is the long-term
causal effect on  revenue if we switch from payoffs $(W, L) =(\$10, -\$6)$
of game 0 to payoffs $(W, L) =(\$15, -\$1)$ of game 1?".
\end{quote}
The games induced by the two aforementioned payoff matrices represent the two different policies we wish to compare.
To evaluate our method, we consider the last period as long-term, and hold out data from this period.
We define the causal estimand in Eq.~\eqref{eq:ce} as
\begin{align}
\label{eq:estimand_rapoport}
\mathrm{CE} = c^\intercal (\alpha_1(T; \ones) - 
\alpha_0(T; \zeroes)),
\end{align}
where $T=3$ and $c$ is a vector of coefficients.
The interpretation  is that, given an element $c_a$ of $c$, the agent playing action $a$ 
is assumed to pay a constant fee $c_a$.
To check the robustness of our method we test Algorithm 1 over multiple values of $c$.

%% Proposed method.
\subsection{Implementation of Algorithm~\ref{estimation_algo} and results}
\label{section:concrete}
Here we demonstrate how Algorithm~\ref{estimation_algo} can be applied to estimate the long-term causal effect in Eq.~\eqref{eq:estimand_rapoport} on the Rapoport \& Boebel dataset. To this end we clarify Algorithm~\ref{estimation_algo} step by step, and give more details in the supplement.

\textbf{Step 1: Model parameters.} 
For simplicity we assume that the models in the two games share common parameters, and thus 
$(\phi_1, \psi_1, \lambda_1) = (\phi_0, \psi_0, \lambda_0)\equiv (\phi, \psi, \lambda)$, where $\lambda$ are the parameters of the behavioral model to be described in Step 8. Having common parameters also acts as regularization and thus helps estimation.

\textbf{Step 4: Sampling parameters and initial behaviors} As explained later we assume that there are 3 different behaviors and thus 
$\phi, \psi, \lambda$ are vectors with 3 components. Let $x\sim U(m, M)$ denote that every component 
of $x$ is uniform on $(m, M)$, independently. We choose diffuse priors for our parameters, 
specifically, $\phi\sim \mathrm{U}(0, 10)$, $\psi\sim \mathrm{U}(-5, 5)$, and $\lambda\sim \mathrm{U}(-10, 10)$.
Given $\phi$ we sample the initial behaviors as Dirichlet, i.e., $\beta_1(0; Z)\sim \mathrm{Dir}(\phi)$ and $\beta_0(0; Z)\sim\mathrm{Dir}(\phi)$, independently.

\textbf{Steps 5 \& 7: Pivot to counterfactuals.} 
Since we have a completely randomized experiment (A/B test) it holds that $\rho_Z=0.5$ and 
therefore $\beta^{(0)} = 0.5 (\beta_1(0; Z) + \beta_0(0; Z))$.
Now we can pivot to the counterfactual population behaviors under $Z=\ones$ and $Z=\zeroes$ by setting 
$\beta_1(0; \ones) = \beta_0(0; \zeroes) = \beta^{(0)}$.

\textbf{Step 8: Sample counterfactual behavioral history.}
As the temporal model, we adopt
the {\em lag-one vector autoregressive model}, also known as $\mathrm{VAR}(1)$.
We transform\footnote{$y = \mathrm{logit}(x)$ is defined as the function $\Delta^m \to \Reals{m-1}$, $y[i]=\mathrm{log}(x[i+1]/x[1])$, where $x[1] \neq 0$ wlog.} 
the population behavior into a new variable $w_t = \mathrm{logit}(\beta_1(t; \ones))\in\Reals{2}$ (also do so for $\beta_0(t; \zeroes)$). Such transformation with a unique inverse is necessary because population behaviors are constrained on the simplex, and thus form so-called compositional data~\citep{aitchison1986statistical, grunwald1993time}. The VAR(1) model implies that
\begin{align}
w_t = \psi[1] \ones + \psi[2] w_{t-1} + \psi[3] \epsilon_t,\nonumber
\end{align}
where $\psi[k]$ is the $k$th component of $\psi$ and $\epsilon_t \sim \mathcal{N}(0, I)$ is i.i.d. standard bivariate normal. Eq.~\eqref{eq:var} is used to sample 
the behavioral history, $B_j$, in Step 8 of Algorithm~\ref{estimation_algo}.
%We note that the VAR model predicts the temporal evolution of counterfactual population behaviors when $Z=\ones$ or $Z=\zeroes$, and not for any $Z$.

\textbf{Step 9: Behavioral model.}
For the behavioral model, we adopt the \emph{quantal $p$-response} (\qlk{p}) model~\citep{stahl1994experimental}, which has been successful in predicting human actions in real-world experiments~\citep{wright2010beyond}. 
We choose $p=3$ behaviors, namely $\B=\{b_0, b_1, b_2\}$ of increased sophistication parametrized by $\lambda = (\lambda[1], \lambda[2], \lambda[3])\in\Reals{3}$.
%
% \qlk{3} defines the following behavioral model.
%
Let $G_j$ denote the $5\times 5$ payoff matrix of game $j$ and let the term {\em strategy} denote a distribution over all actions. 
An agent with behavior $b_0$ plays the uniform strategy, $$
P(A_i(t; Z)=a| B_i(t; Z)=b_0, G_j) = 1/5.
$$
An agent of level-1 (row player) assumes to be playing only against level-0 agents and thus
expects per-action profit $u_1 = (1/5) G_j \ones$ (for column player we use the transpose of $G_j$). 
The level-1 agent will then play a strategy proportional to $e^{\lambda[1] u_1}$, where $e^x$ for vector $x$ denotes the element-wise exponentiation, $e^x = (e^{x[k]})$.
% % Let $u_a$ be the expected utility if an agent plays action $a$ against a level-0 agent; i.e., 
% if the agent is a row player we can write
% $u_a = (1/3) G_{ja}^\intercal \ones$, where $G_{ja}$ is the $a$th row of the payoff matrix $G_j$ (the column player case is symmetric). Then, an agent adopting behavior $b_1$ will play action $a$ with probability
% \begin{align}
% \label{eq:qlk1}
% P(A_i(t; Z)=a|B_i(t; Z)=b_1, G_j) \propto e^{u_a \lambda[1]}.
% \end{align}
The precision parameter $\lambda[1]$ determines 
how much an agent insists on maximizing expected utility; 
for example, if $\lambda[1]=\infty$, the agent plays the action 
with maximum expected payoff (best response); if $\lambda[1]=0$, the agent acts as a level-0 agent.
An agent of level-2 (row player) assumes to be playing only against level-1 agents with precision $\lambda[2]$ and therefore expects to face strategy proportional to $e^{\lambda[2] u_1}$.~Thus its expected per-action profit is $u_2 \propto G_j e^{\lambda[2] u_1}$, and plays strategy $\propto e^{\lambda[3] u_2}$.

Given $G_j$ and $\lambda$ we calculate a $5\times 3$ matrix $Q_j$ where the $k$th column is the strategy played by an agent with behavior $b_{k-1}$. The expected population action is therefore $\bar{\alpha}_j(t; Z) = Q_j \beta_j(t; Z)$.  
The population action $\alpha_j(t; Z)$ is distributed as a normalized multinomial random variable with expectation $\bar{\alpha}_j(t; Z)$, and so 
$P(\alpha_j(t; \ones) | \beta_j(t; \ones), G_j) = \mathrm{Multi}(|\Agents| \cdot \alpha_j(t; \ones); \bar{\alpha}_j(t; \ones))$, where $\mathrm{Multi}(n; p)$ is the multinomial density of observations $n=(n_1, \ldots, n_K)$ with probabilities $p=(p_1, \ldots, p_K)$. Hence, the full likelihood for observed actions in game $j$ in Steps 10 and 11 of Algorithm~\ref{estimation_algo} is given by the product
$$
P(\mathcal{D}_j | B_j, G_j) = \prod_{t=0}^{T-1} \mathrm{Multi}(|\Agents| \cdot \alpha_j(t; j\ones); \bar{\alpha}_j(t; j\ones)).
$$

\begin{figure}[t!]
\centering
\hspace{-10px}
\includegraphics[width=.65\textwidth]{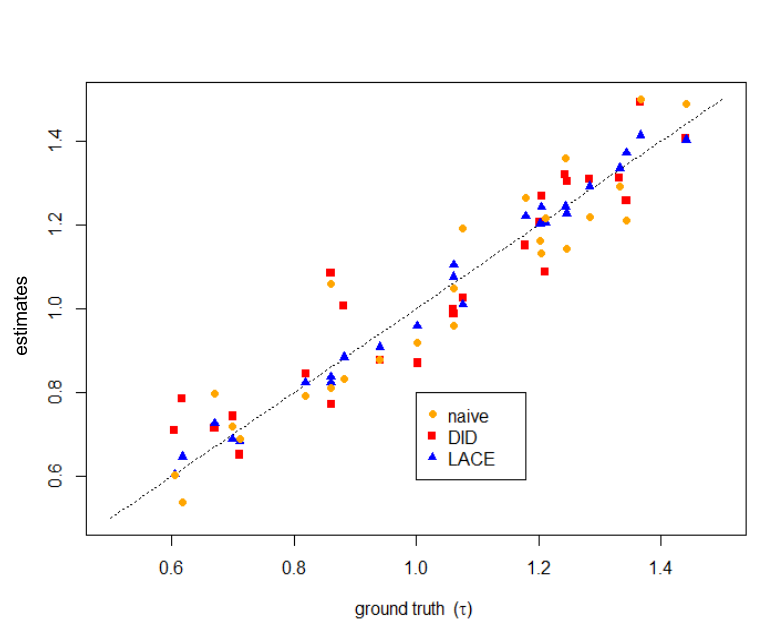}
\caption{Estimates of long-term effects from different methods
corresponding to 25 random objective coefficients $c$ in Eq.~\eqref{eq:estimand_rapoport}. 
For estimates of our method we ran  Algorithm~\ref{estimation_algo} for 100 iterations.}
\label{figure:results}
\end{figure}
Running Algorithm~\ref{estimation_algo} on the Rapoport and Boebel dataset  yields the estimates shown in Figure \ref{figure:results}, for 25 different fee vectors $c$, where each component $c_a$ is sampled uniformly at random from $(0, 1)$.
We also test
difference-in-differences (DID), which estimates the causal effect through 
$$
\hat{\tau}^{did} = 
[R(\alpha_{1}(2; Z)) - R(\alpha_1(0; Z))] - [R(\alpha_{0}(2; Z)) - R(\alpha_0(0; Z))],
$$ and a naive method (``naive" in the plot), which ignores the dynamical aspect and estimates the long-term causal effect as $\hat{\tau}^{nai} = [R(\alpha_{1}(2; Z)) - R(\alpha_0(2; Z))]$. 
Our estimates (``LACE" in the plot) are closer to the truth ($\mathrm{mse}=0.045$) than the estimates from the naive method ($\mathrm{mse}=0.185$) and from DID ($\mathrm{mse}=0.361$). 
This illustrates that our method can pull game-theoretic information from the data for long-term causal inference, whereas the other methods cannot.
% As the first paper on long-term causal effects, we think the additional complexity would obscure our conceptual and methodological contributions.

%
%\CenterFloatBoxes
%\begin{floatrow}
%\ffigbox
% 
%\killfloatstyle
\section{Conclusion}
\label{section:conclusion}
One critical shortcoming of statistical methods of causal inference is that they typically do not assess the long-term effect of policy changes.
Here we combined causal inference and game theory to build a framework for estimation of such long-term effects in multiagent economies.
 Central to our approach is behavioral game theory, which provides a natural latent space model of how agents act and how their actions evolve over time. Such models enable to predict how agents would act under various policy assignments and at various time points, which is key for valid causal inference. 
%
% The methodology relies on a set of assumptions 
 Working on data from an actual behavioral experiment set we showed how our framework can be applied to estimate the long-term effect of changing the payoff structure of a normal-form game.

Our framework could be extended in future work by incorporating learning (e.g., fictitious play, bandits, no-regret learning) to better model the dynamic response of multiagent systems to policy changes. 
Another interesting extension would be to use our framework for optimal design of experiments in such systems, 
which needs to account for heterogeneity in agent learning capabilities and for intrinsic dynamical  properties of the systems' responses to experimental treatments.
% We believe that progress in addressing these issues will lead to new and fruitful interactions of game theory with experimental design and causal inference.

\section*{Acknowledgements}
The authors wish to thank Leon Bottou, the organizers and participants of CODE{@}MIT'15, GAMES'16, the Workshop on Algorithmic Game Theory and Data Science (EC'15), and the anonymous NIPS reviewers for their valuable feedback.
Panos Toulis has been supported in part by the 2012 Google US/Canada Fellowship in Statistics.
David C. Parkes was supported in part by NSF grant CCF-1301976 and the SEAS TomKat fund.

\bibliography{refs}

\begin{thebibliography}{}

\bibitem[Abadie(2005)Abadie]{abadie2005semiparametric}
Abadie, A. (2005).
\newblock Semiparametric difference-in-differences estimators.
\newblock {\em The Review of Economic Studies\/}, {\bf 72}(1), 1--19.

\bibitem[Aitchison(1986)Aitchison]{aitchison1986statistical}
Aitchison, J. (1986).
\newblock {\em The statistical analysis of compositional data\/}.
\newblock Springer.

\bibitem[Angrist and Pischke(2008)Angrist and Pischke]{angrist2008mostly}
Angrist, J.~D. and Pischke, J.-S. (2008).
\newblock {\em Mostly harmless econometrics: An empiricist's companion\/}.
\newblock Princeton university press.

\bibitem[Athey {\em et~al.}(2011)Athey, Levin, and Seira]{athey2008comparing}
Athey, S., Levin, J., and Seira, E. (2011).
\newblock Comparing open and sealed bid auctions: Evidence from timber
  auctions.
\newblock {\em The Quarterly Journal of Economics\/}, {\bf 126}(1), 207--257.

\bibitem[Bottou {\em et~al.}(2013)Bottou, Peters, Qui{\~n}onero-Candela,
  Charles, Chickering, Portugualy, Ray, Simard, and Snelson]{bottou2012}
Bottou, L., Peters, J., Qui{\~n}onero-Candela, J., Charles, D.~X., Chickering,
  D.~M., Portugualy, E., Ray, D., Simard, P., and Snelson, E. (2013).
\newblock Couterfactual reasoning and learning systems.
\newblock {\em J. Machine Learning Research\/}, {\bf 14}, 3207--3260.

\bibitem[Brodersen {\em et~al.}(2014)Brodersen, Gallusser, Koehler, Remy, and
  Scott]{brodersen2013inferring}
Brodersen, K.~H., Gallusser, F., Koehler, J., Remy, N., and Scott, S.~L.
  (2014).
\newblock Inferring causal impact using bayesian structural time-series models.
\newblock {\em Annals of Applied Statistics\/}.

\bibitem[Card and Krueger(1994)Card and Krueger]{card1993minimum}
Card, D. and Krueger, A.~B. (1994).
\newblock Minimum wages and employment: {A case study of the fast food industry
  in New Jersey and Pennsylvania}.
\newblock {\em American Economic Review\/}, {\bf 84}(4), 772--793.

\bibitem[Dash(2005)Dash]{dash2005restructuring}
Dash, D. (2005).
\newblock Restructuring dynamic causal systems in equilibrium.
\newblock In {\em Proceedings of the Tenth International Workshop on Artificial
  Intelligence and Statistics (AIStats 2005)\/}, pages 81--88.

\bibitem[Dash and Druzdzel(2001)Dash and Druzdzel]{dash2001caveats}
Dash, D. and Druzdzel, M. (2001).
\newblock Caveats for causal reasoning with equilibrium models.
\newblock In {\em Symbolic and Quantitative Approaches to Reasoning with
  Uncertainty\/}, pages 192--203. Springer.

\bibitem[Donald and Lang(2007)Donald and Lang]{donald2007inference}
Donald, S.~G. and Lang, K. (2007).
\newblock Inference with difference-in-differences and other panel data.
\newblock {\em The review of Economics and Statistics\/}, {\bf 89}(2),
  221--233.

\bibitem[Fisher(1935)Fisher]{fisher1935design}
Fisher, R.~A. (1935).
\newblock {\em The design of experiments.}
\newblock Oliver \& Boyd.

\bibitem[Granger(1988)Granger]{granger1988some}
Granger, C.~W. (1988).
\newblock Some recent development in a concept of causality.
\newblock {\em Journal of econometrics\/}, {\bf 39}(1), 199--211.

\bibitem[Grunwald {\em et~al.}(1993)Grunwald, Raftery, and
  Guttorp]{grunwald1993time}
Grunwald, G.~K., Raftery, A.~E., and Guttorp, P. (1993).
\newblock Time series of continuous proportions.
\newblock {\em Journal of the Royal Statistical Society. Series B
  (Methodological)\/}, pages 103--116.

\bibitem[Hahn {\em et~al.}(2015)Hahn, Goswami, and Mela]{hahn2015bayesian}
Hahn, P.~R., Goswami, I., and Mela, C.~F. (2015).
\newblock A bayesian hierarchical model for inferring player strategy types in
  a number guessing game.
\newblock {\em The Annals of Applied Statistics\/}, {\bf 9}(3), 1459--1483.

\bibitem[Heckman and Vytlacil(2005)Heckman and Vytlacil]{heckman05}
Heckman, J.~J. and Vytlacil, E. (2005).
\newblock Structural equations, treatment effects, and econometric policy
  evaluation1.
\newblock {\em Econometrica\/}, {\bf 73}(3), 669--738.

\bibitem[Heckman {\em et~al.}(1998)Heckman, Lochner, and
  Taber]{heckman1998general}
Heckman, J.~J., Lochner, L., and Taber, C. (1998).
\newblock General equilibrium treatment effects: {A study of tuition policy}.
\newblock {\em American Economic Review\/}, {\bf 88}(2), 3810386.

\bibitem[Holland and Miller(1991)Holland and Miller]{holland1991artificial}
Holland, J.~H. and Miller, J.~H. (1991).
\newblock Artificial adaptive agents in economic theory.
\newblock {\em The American Economic Review\/}, pages 365--370.

\bibitem[Holland(1986)Holland]{holland1986statistics}
Holland, P.~W. (1986).
\newblock Statistics and causal inference.
\newblock {\em Journal of the American statistical Association\/}, {\bf
  81}(396), 945--960.

\bibitem[McKelvey and Palfrey(1995)McKelvey and Palfrey]{mckelvey95}
McKelvey, R.~D. and Palfrey, T.~R. (1995).
\newblock Quantal response equilibria for normal form games.
\newblock {\em Games and economic behavior\/}, {\bf 10}(1), 6--38.

\bibitem[Ostrovsky and Schwarz(2011)Ostrovsky and Schwarz]{ostrovsky2011}
Ostrovsky, M. and Schwarz, M. (2011).
\newblock Reserve prices in internet advertising auctions: A field experiment.
\newblock In {\em Proceedings of the 12th ACM conference on Electronic
  commerce\/}, pages 59--60. ACM.

\bibitem[Pearl(2000)Pearl]{pearl2000causality}
Pearl, J. (2000).
\newblock {\em Causality: models, reasoning and inference\/}.
\newblock Cambridge University Press.

\bibitem[Rapoport and Boebel(1992)Rapoport and Boebel]{rapoport1992mixed}
Rapoport, A. and Boebel, R.~B. (1992).
\newblock Mixed strategies in strictly competitive games: {A further test of
  the minimax hypothesis}.
\newblock {\em Games and Economic Behavior\/}, {\bf 4}(2), 261--283.

\bibitem[Rubin(2011)Rubin]{rubin2011causal}
Rubin, D.~B. (2011).
\newblock Causal inference using potential outcomes.
\newblock {\em Journal of the American Statistical Association\/}.

\bibitem[Stahl and Wilson(1994)Stahl and Wilson]{stahl1994experimental}
Stahl, D.~O. and Wilson, P.~W. (1994).
\newblock Experimental evidence on players' models of other players.
\newblock {\em Journal of Economic Behavior \& Organization\/}, {\bf 25}(3),
  309--327.

\bibitem[Von~Neumann and Morgenstern(1944)Von~Neumann and
  Morgenstern]{von1944theory}
Von~Neumann, J. and Morgenstern, O. (1944).
\newblock {\em Theory of games and economic behavior.}
\newblock Princeton University Press.

\bibitem[Wright and Leyton-Brown(2010)Wright and
  Leyton-Brown]{wright2010beyond}
Wright, J.~R. and Leyton-Brown, K. (2010).
\newblock Beyond equilibrium: Predicting human behavior in normal-form games.
\newblock In {\em Proc. 24th AAAI Conf. on Artificial Intelligence\/}.

\end{thebibliography}
\bibliographystyle{natbib}
\pagebreak

\appendix

\newcommand{\commentEq}[1]{\small\hspace{10pt} \text{[{\em #1} ]}}
\newcommand{\holdout}[1]{\textcolor{lightgray}{#1}}

\setcounter{theorem}{0}
\section{Proof of Theorem~\ref{theorem:main}}
\begin{theorem}[Estimation of long-term causal effects]
\label{theorem:main}
  Suppose that behaviors evolve according to a known temporal model, and actions are 
  distributed conditionally on behaviors according to a known behavioral model.
  Suppose that Assumptions~\ref{assumption:finite_behaviors}, \ref{assumption:initial} and \ref{assumption:ignorability} hold for such models.
  Then, for every policy $j\in\{0, 1\}$ as the iterations of Algorithm~\ref{estimation_algo} increase,
  $
  \mu_j/\nu_j \to \Ex{R(\alpha_j(T; j\ones)) | \mathcal{D}_j}.
  $
  The output $\widehat{\text{CE}}(T)$ of Algorithm~\ref{estimation_algo} asymptotically estimates the long-term causal effect, i.e.,
  $$
  \mathbb{E}(\widehat{\text{CE}}(T))
   = \Ex{R(\alpha_1(T; \ones)) -R(\alpha_0(T; \zeroes))} \equiv \text{CE}(T).
  $$
  %i.e., Algorithm~\ref{estimation_algo} unbiasedly estimates the long-term causal effect of interest in Eq.~\eqref{eq:ce}.
\end{theorem}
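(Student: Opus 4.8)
The plan is to read Algorithm~\ref{estimation_algo} as a self-normalized importance sampler, so that each ratio $\mu_j/\nu_j$ converges to a posterior expectation, and then to obtain the statement on $\mathrm{CE}(T)$ by a single application of the tower property over the randomization distribution of $Z$. To begin, I would fix $j\in\{0,1\}$ and write down the joint law $q_j$ from which one iteration of the algorithm samples the latent quantities: with $\theta_j=(\phi_j,\psi_j)$, an iteration draws $\theta_j$ from the prior, the initial behaviors $\beta_0(0;Z),\beta_1(0;Z)$ from $\pi(\cdot\,;\phi_0),\pi(\cdot\,;\phi_1)$, the pivot $\beta^{(0)}$ of Eq.~\eqref{eq:invariant}, then (Step~7) sets $\beta_j(0;j\ones)=\beta^{(0)}$, then (Step~8) the counterfactual behavior path $B_j=\{\beta_j(t;j\ones)\}_{t=0}^{T}$ from the temporal model $f(\cdot\mid\psi_j,\cdot)$, and finally (Step~9) $\alpha_j(T;j\ones)$ from the behavioral model.

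The crux is then to argue that, under the hypotheses of the theorem, $q_j$ is exactly the marginal law of these latents in the assumed generative model and $P(\mathcal{D}_j\mid B_j,G_j)$ is exactly the conditional law of the short-term data given that latent path. Assumption~\ref{assumption:ignorability} makes the transition density $f$ and its parameters assignment-invariant (so Step~1 may legitimately treat $\phi_j,\psi_j$ as $Z$-free, and the path that generated $\mathcal{D}_j$ obeys the same dynamics as the one sampled in Step~8); Assumption~\ref{assumption:initial}, evaluated at $Z=j\ones$ where $\rho_Z\in\{0,1\}$, gives $\beta_j(0;j\ones)=\beta^{(0)}$ and hence licenses the pivot in Steps~5 and~7; the behavioral model of Definition~\ref{definition:behavioral} depends on the policy only through $G_j$, so $\mathcal{D}_j$ is conditionally independent of $(\theta_j,\alpha_j(T;j\ones))$ given $(B_j,G_j)$, i.e.\ $P(\mathcal{D}_j\mid B_j,G_j)$ is indeed the data likelihood; and Assumption~\ref{assumption:finite_behaviors} keeps this likelihood a bounded function with all densities well defined on finite-dimensional simplices.

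Granting this, Bayes' rule gives that the posterior density of $(\theta_j,B_j,\alpha_j(T;j\ones))$ given $\mathcal{D}_j$ is proportional to $q_j(\cdot)\,P(\mathcal{D}_j\mid B_j,G_j)$, with normalizing constant the evidence $\mathbb{E}_{q_j}[P(\mathcal{D}_j\mid B_j,G_j)]=P(\mathcal{D}_j)\in(0,\infty)$. Writing $W^{(m)}=P(\mathcal{D}_j\mid B_j^{(m)},G_j)$ and $g^{(m)}=R(\alpha_j(T;j\ones)^{(m)})$ for the weight and integrand at iteration $m$, so $\mu_j=\sum_{m}W^{(m)}g^{(m)}$ and $\nu_j=\sum_{m}W^{(m)}$, and noting that for fixed $j$ the iterates are i.i.d.\ draws from $q_j$ with $\mathbb{E}_{q_j}|W g|<\infty$ (integrability of $R(\alpha_j(T;j\ones))$, e.g.\ as $R$ is bounded on the compact simplex), the strong law of large numbers applied separately to $\tfrac1M\mu_j$ and $\tfrac1M\nu_j$ yields
\begin{align}
\frac{\mu_j}{\nu_j}\ \longrightarrow\ \frac{\mathbb{E}_{q_j}[W g]}{\mathbb{E}_{q_j}[W]}\ =\ \mathbb{E}(R(\alpha_j(T;j\ones))\mid \mathcal{D}_j)\nonumber
\end{align}
almost surely as the iteration count grows, which is the first assertion. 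For the statement on $\mathrm{CE}(T)$, interpret $\widehat{\mathrm{CE}}(T)$ as this almost-sure limit, so $\widehat{\mathrm{CE}}(T)=\mathbb{E}(R(\alpha_1(T;\ones))\mid\mathcal{D}_1)-\mathbb{E}(R(\alpha_0(T;\zeroes))\mid\mathcal{D}_0)$, and take expectation over the randomization distribution of $Z$ (equivalently over $\mathcal{D}_0,\mathcal{D}_1$): since $\alpha_j(T;j\ones)$ and $\mathcal{D}_j$ are jointly distributed under the assumed model, the tower property gives $\mathbb{E}[\mathbb{E}(R(\alpha_j(T;j\ones))\mid\mathcal{D}_j)]=\mathbb{E}(R(\alpha_j(T;j\ones)))$ for $j=0,1$, and linearity yields $\mathbb{E}(\widehat{\mathrm{CE}}(T))=\mathbb{E}(R(\alpha_1(T;\ones)))-\mathbb{E}(R(\alpha_0(T;\zeroes)))=\mathrm{CE}(T)$.

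I expect the main obstacle to be the second paragraph: making precise that $P(\mathcal{D}_j\mid B_j,G_j)$ is the correct likelihood factor, i.e.\ that Assumptions~\ref{assumption:initial} and \ref{assumption:ignorability} genuinely license identifying the short-term experimental trajectory under the realized assignment $Z$ with the counterfactual trajectory obtained through the pivot $\beta^{(0)}$, so that importance-weighting the proposal $q_j$ by this factor targets the true posterior $p(\,\cdot\mid\mathcal{D}_j)$ rather than an approximation. Once that model identification is secured, the remainder is a routine importance-sampling consistency argument together with the tower-property computation above.
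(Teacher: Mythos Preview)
Your proposal is correct and follows the same line as the paper's own proof: recognize each iteration as a draw of $(\phi_j,\psi_j,B_j,\alpha_j(T;j\ones))$ from the prior weighted by the likelihood $P(\mathcal{D}_j\mid B_j,G_j)$, use Bayes and a law-of-large-numbers argument to conclude $\mu_j/\nu_j\to\mathbb{E}(R(\alpha_j(T;j\ones))\mid\mathcal{D}_j)$, then apply iterated expectations for the statement on $\mathrm{CE}(T)$. Your explicit self-normalized importance-sampling framing, the SLLN step, and your discussion of how Assumptions~\ref{assumption:initial} and~\ref{assumption:ignorability} license the pivot are more carefully spelled out than in the paper, which simply writes the limiting integrals, invokes $P(\mathcal{D}\mid\alpha,\omega)=P(\mathcal{D}\mid\omega)$ and Bayes, and appeals to the continuous mapping theorem without isolating where the assumptions enter.
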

\begin{proof}
%% new stuff
Fix a policy $j$ in Algorithm~\ref{estimation_algo} and drop the subscript $j$ in the notation of the algorithm. Therefore we can write:
\begin{align}
\omega & \equiv (\phi_j, \psi_j, B_j)\nonumber\\
\alpha & \equiv \alpha_j(T; j\ones)\nonumber\\
P(\mathcal{D} | \omega) & \equiv P(\mathcal{D}_j | B_j, G_j).\nonumber
\end{align}
The way Algorithm~\ref{estimation_algo} is defined, as the iterations increase the variable $\mu$ is estimating
$$
\lim \mu =\int R(\alpha) P(\mathcal{D}| \omega) p(\alpha, \omega) d\omega d\alpha. 
$$

We now rewrite this integral as follows.
\begin{align}
\lim \mu = \int R(\alpha) P(\mathcal{D}| \omega) p(\alpha, \omega) d\omega d\alpha
 & = \int R(\alpha) P(\mathcal{D}| \alpha, \omega) p(\alpha, \omega) d\omega d\alpha
 \commentEq{ $p(\mathcal{D} | \alpha, \omega) = P(\mathcal{D} | \omega)$}
 \nonumber\\
 & = \int R(\alpha) P(\alpha, \omega|\mathcal{D}) P(\mathcal{D}) d\omega d\alpha
 \commentEq{by Bayes theorem}
 \nonumber\\
 & = P(\mathcal{D}) \int R(\alpha) P(\alpha |\mathcal{D}) d\alpha
 \commentEq{$\omega$ is marginalized out}
 \nonumber\\
 & = P(\mathcal{D}) \Ex{R(\alpha) | \mathcal{D}}.\nonumber
\end{align}

The first equation, $p(\mathcal{D} | \alpha, \omega) = P(\mathcal{D} | \omega)$, holds by definition of the behavioral model: the history of latent behaviors is sufficient for the likelihood of observed actions. Another way to phrase this is that conditional on latent behavior the observed action is independent from any other variable.

Similarly, as the iterations increase the variable $\nu$ is estimating
$$
\lim \nu =\int P(\mathcal{D}| \omega) p(\alpha, \omega) d\omega d\alpha. 
$$
We now rewrite this integral as follows.
\begin{align}
\lim \nu = \int P(\mathcal{D}| \omega) p(\alpha, \omega) d\omega d\alpha
 & = \int P(\mathcal{D}| \alpha, \omega) p(\alpha, \omega) d\omega d\alpha
 \commentEq{because $p(\mathcal{D} | \alpha, \omega) = P(\mathcal{D} | \omega)$}
 \nonumber\\
 & = \int P(\alpha, \omega|\mathcal{D}) P(\mathcal{D}) d\omega d\alpha
 \commentEq{by Bayes theorem}
 \nonumber\\
 & = P(\mathcal{D}) \int P(\alpha |\mathcal{D}) d\alpha
 \nonumber\\
 & = P(\mathcal{D}).\nonumber
\end{align}

By the continuous mapping theorem we conclude that
$$
\lim \mu/\nu \to \Ex{R(\alpha) | \mathcal{D}}.
$$
Thus $\Ex{\lim \mu_1/\nu_1} = \Ex{R(\alpha_1(T; \ones))}$
and $\Ex{\lim \mu_0/\nu_0} = \Ex{R(\alpha_0(T; \zeroes))}$ and so 
$$
\Ex{\lim \mu_1/\nu_1}  - \Ex{\lim \mu_0/\nu_0} \to \Ex{R(\alpha_1(T; \ones))} - \Ex{R(\alpha_0(T; \zeroes))},
$$ 
i.e., Algorithm~\ref{estimation_algo} consistently estimates the long-term causal effect.
\end{proof}

\section{Connection of assumptions to policy invariance}
Assumption~\ref{assumption:ignorability} in our framework is related to
  \emph{policy invariance} assumptions in econometrics of policy
  effects~\citep{heckman05,heckman1998general}. 
  Intuitively, policy invariance posits that given the
  \emph{choice} of policy by an agent, the initial process that 
resulted in this choice does not affect the outcome. For example,
  given that an individual chooses to participate in a tax benefit
  program, the way the individual was assigned to the program
  (e.g., lottery, recommendation, or point of a gun) does not alter
  the outcome that will be observed for that individual.  Our assumption
  is different because we have a temporal evolution of
  population behavior and there is no free choice of an agent about the assignment, since we assume a randomized experiment.
 But our assumption shares the essential aspect of conditional ignorability of assignment that is crucial in causal inference.

\section{Discussion of related methods}
\label{section:related}
Consider the estimand for the Rapoport-Boebel experiment~\citep{rapoport1992mixed}:
\begin{align}
\tau = c^\intercal(\alpha_1(T; \ones) - \alpha_0(T; \zeroes)).\nonumber
\end{align}
Here we discuss how standard methods would estimate
such estimand.  Our goal is to illustrate the
fundamental assumptions underpinning each method, and compare with
our Assumptions~\ref{assumption:initial} and \ref{assumption:ignorability}.
To illustrate we will assume a specific value $c=
(0, 1, 0, 2, 0, 0, 0, 0, 1, 1)^\intercal$.
In discussing these
methods, we will mostly be concerned with how point estimates compare to the true value of the estimand, which here is $\tau = \$0.054$ using the experimental data 
in Table 2.

The naive approach would be to consider only the latest observed time point ($t_0=2$) under the experiment assignment $Z$, and use the observed population actions under $Z$ as an estimate for $\tau$; i.e., 
\begin{align}
\hat{\tau}^{naive}=c^\intercal (\alpha_1(t_0; Z) - \alpha_0(t_0; Z)) = -\$ 0.051.\nonumber
\end{align}
But for this estimate to be unbiased for
$\tau$, we generally require that
$$
\alpha_1(t_0; Z) - \alpha_0(t_0; Z) = \alpha_1(T; \ones) - \alpha_0(T; \zeroes).
$$
The naive estimate therefore makes a direct extrapolation from $t=t_0$ to $t=T$ and from the observed assignment $Z$ to the counterfactual assignments $Z=\ones$ and $Z=\zeroes$. 
This ignores, among other things, the dynamic nature of agent actions.

A more sophisticated approach is to analyze the 
agent actions as a time series. 
For example, \citet{brodersen2013inferring} 
developed a method to estimate the effects of ad campaigns on website visits.  Their method 
was based on the idea of ``synthetic controls'', i.e., 
they created a time-series using different sources of information that would act as the counterfactual 
to the observed time-series after the intervention. 
However, their problem is macroeconometric 
and they work with observational data. 
Thus, there is neither experimental randomized assignment to games, nor strategic interference between agents, nor dynamic agent actions. More crucially, they do not 
study long-term equilbrium effects. 
By construction, in our problem we can leverage behavioral game theory to make more 
informed predictions of counterfactuals to time points after the intervention
at which the distribution of outcomes has stabilized.

Another approach, common in econometrics, is the
{\em difference-in-differences} (DID) estimator~\citep{card1993minimum,
  donald2007inference, ostrovsky2011}.  In our case, this method is
not perfectly applicable because there are no observations before the
intervention, but we can still entertain the idea by considering
period $t=1$ as the pre-intervention period.
The DID estimator compares the difference in outcomes 
before and after the intervention for both the treated
and control groups. In our application, this estimator takes the value
\begin{align}
\label{eq:estimator_did}
\hat{\tau}^{did} = \underbrace{c^\intercal (\alpha_1(t_0; Z) - 
\alpha_1(1; Z))}_{
\text{change in revenue for game 2}} - 
\underbrace{c^\intercal (\alpha_0(t_0; Z) - 
\alpha_0(1; Z))}_{\text{change in revenue for game 1}}
 = -\$0.164.
\end{align}
This estimate is also far from the true value similar to the naive estimate. The DID estimator is unbiased for $\tau$ only if 
there is an additive structure in the actions~\citep{abadie2005semiparametric}, \citep{angrist2008mostly} (Section 5.2), e.g., 
$\alpha_j(t; Z) = \mu_j  + \lambda_t + \epsilon_{jt}$,
where $\mu_j$ is a policy-specific parameter, 
$\lambda_t$ is a temporal parameter, and $\epsilon$ is noise.
The DID estimator thus captures a linear trend in the data by assuming a common parameter for both 
treatment arms ($\lambda_t$) that is canceled out in subtraction in Eq. \eqref{eq:estimator_did}.
The extent to which an additivity assumption 
is reasonable depends on the application, 
however, by definition, it implies ignorability of 
the assignment (i.e., $Z$ does not appear 
in the model of $a_j(t; Z)$), 
and thus it relies on assumptions that are stronger than our assumptions~\citep{abadie2005semiparametric, angrist2008mostly}.

%A more sophisticated econometric method is due to 
In a structural approach, \citet{athey2008comparing} studied
the effects of timber auction format (ascending versus
sealed bid) on competition for timber tracts.
They
estimated bidder valuations from observed data in one auction and 
imputed counterfactual bid distributions in the other auction,
under the assumption of equilibrium play in both auctions.
%\footnote{Note that Athey et.al. \cite{athey2008comparing} consider a different situation than ours. We consider a normal-form game i.e., the payoff structure through Table \ref{table:game} is known and fixed, whereas bidders in an auction carry private valuations when entering an auction.} 
This approach makes two critical implicit assumptions that together
are stronger than Assumption~\ref{assumption:ignorability}.  First,
the bidder valuation distribution is assumed to be a \emph{primitive}
that can be used to impute counterfactuals in other treatment
assignments. In other words, the assignment is independent of bidder
values, and thus it is strongly ignorable.
Second, although imputation is performed for potential outcomes in
equilibrium, which captures the notion of long-term effects, inference
is performed under the assumption of equilibrium play in the
\emph{observed} outcomes, and thus temporal dynamic behavior is assumed
away.

Finally, another popular approach to causality is through {\em
  directed acyclical graphs} (DAGs) between the variables of interest
\citep{pearl2000causality}. For example, \citet{bottou2012} studied the causal effects of the machine learning
algorithm that scores online ads in the Bing search engine on the
search engine revenue.  Their approach was to create a full DAG of the
system including variables such as queries, bids, and prices, and made
a Causal Markov assumption for the DAG. This allows to predict
counterfactuals for the revenue under manipulations of the scoring
algorithm, using only observed data generated from the assumed DAG.
However, a key assumption of the DAG approach 
is that the underlying structural equation model is stable under the treatment assignment, and only edges coming from parents of the manipulated variable need to be removed;
as before, assignment is considered strongly ignorable.
As pointed out by \citet{dash2001caveats} this might be
implausible in equilibrium systems. Consider, for example, a system
where $X \to Y \leftarrow Z$, and a manipulation that sets the
distribution of $Y$ independently of $X, Z$. Then after manipulation
the two edges will need to be removed. However, if in an equilibrium
it is required that $Y \approx X Z$, then the two arrows should be
reversed after the manipulation.  Proper causal inference in
equilibrium systems through causal graphs remains an open area without a well-established
methodology~\citep{dash2005restructuring}. 

Finally we note that there exists the concept of Granger causality~\citep{granger1988some}, which remains important in econometrics. The central idea in Granger causality is predictability, in particular the ability of lagged iterates of a time series $x(t)$ to predict future values of the outcome of interest, which in our case is the population action $\alpha_j(t; Z)$. This causality concept does not take into account the randomization from the experimental design, which is key in statistical causal inference.

% \section{Graphical depiction of Algorithm 1}
% \begin{figure*}[h!]
% \centering
% \includegraphics[scale=0.3]{../explain.PNG}
% \caption{
% {\em Graphical depiction of Algorithm 1 for a fixed game $j$;
% {\bf y-axis}: assignments $Z$, \textbf{x-axis}: time; \textbf{blue line}: observed agent actions $\mathcal{D}$; 
% \textbf{dashed lines}: unobserved actions.
% %
% Algorithm 1 follows the path (A)-(B)-(C)-(D)-(E) as follows:
% %
% %
% \textbf{(A)} use data (blue line) to estimate the parameters $(\phi_j, \psi_j)$ of the temporal model for game $j$ (both are independent of $Z$ by Assumptions 1 and 2);
% \textbf{(B)} estimate the aggregate behavior at $t=0$ using $\phi_j$;
% %
% \textbf{(C)} estimate the aggregate behavior before assignment (t=-1) using Assumption 1, the aggregate behavior at $t=0$, and the experimental randomization;
% %
% \textbf{(D)} derive the initial aggregate behavior in the counterfactual assignment $Z=j\ones$ through $\beta_j(0; j\ones)=\beta^{(0)}$;
% \textbf{(E)} estimate the aggregate behavior at $t=T$ from the behaviors at $t=0$ and the estimated parameters in (A). Estimate the aggregate action at $t=T$ from behaviors at $t=T$ and the behavioral model.}}
% \label{figure:algo}
% \end{figure*}

\pagebreak
\section{Application: \citet{rapoport1992mixed} data }

The following tables report the payoff matrix structure (Table~\ref{table:game} used by Rapoport and Boebel) and the observed data (Table~\ref{table:rapoport}), as reported by~\citet{mckelvey95}.
\label{section:data}
\begin{table}[h]
\centering
\caption{Normal-form game in the experiment 
of Rapoport and Boebel (values $L$ and $W$ are specified
as described in the body of the paper)~\citep{rapoport1992mixed}.} 
\label{table:game}
\renewcommand{\arraystretch}{1.2}
\begin{tabular}{cccccc}
   & $a_1'$ & $a_2'$ & $a_3'$ & $a_4'$ & $a_5'$ \\
\hline
$a_1$ & W  & L  & L  & L  & L  \\
$a_2$ & L  & L  & W  & W  & W  \\
$a_3$ & L  & W  & L  & L  & W  \\
$a_4$ & L  & W  & L  & W  & L  \\
$a_5$ & L  & W  & W  & L  &   L
\end{tabular}
\end{table}

\renewcommand{\arraystretch}{1.2}
\begin{table*}[h!t]
\caption{Experimental data of  Rapoport and Boebel \cite{rapoport1992mixed}, as reported by McKelvey and Palfrey~\cite{mckelvey95}. The data includes frequency of actions for the row agent and the 
column agent in the experiment, broken down 
by game and session. Gray color indicates that we assume the data to be long-term and thus we hold them out of data analysis and only use them to measure predictive performance.
{\em (Note: There are five total actions available to every player according to the payoff structure in Table~\ref{table:game}. The frequencies for actions $a_5,a_5'$ can be inferred because $\sum_{i=1}^5 a_i=1$ and $\sum_{i=1}^5 a_i'=1$.)}}
\label{table:rapoport}
\centering
\begin{tabular}{cc|cccc|cccc}
   &  & \multicolumn{4}{c}{row agent} & \multicolumn{4}{c}{column agent}   \\
Game & Period & $a_1$    & $a_2$    & $a_3$    & $a_4$    & $a_1'$ & $a_2'$ & $a_3'$ & $a_4'$    \\
\hline
1    & 1       & 0.308 & 0.307 & 0.113 & 0.120 & 0.350 & 0.218 & 0.202 & 0.092 \\
1    & 2       & 0.293 & 0.272 & 0.162 & 0.100 & 0.333 & 0.177 & 0.190 & 01.40 \\
1    & 3       & 0.273 & 0.350 & 0.103 & 0.123 & 0.353 & 0.133 & 0.258 & 0.102 \\
1    & 4       & \holdout{0.295} & \holdout{0.292} & \holdout{0.113} & \holdout{0.135} & \holdout{0.372} & \holdout{0.192} & \holdout{0.222} & \holdout{0.063} \\
\hline
2    & 1       & 0.258 & 0.367 & 0.105 & 0.143 & 0.332 & 0.115 & 0.245 & 0.140 \\
2    & 2       & 0.290 & 0.347 & 0.118 & 0.110 & 0.355 & 0.198 & 0.208 & 0.108 \\
2    & 3       & 0.355 & 0.313 & 0.082 & 0.100 & 0.355 & 0.215 & 0.187 & 0.110 \\
2    & 4       & \holdout{0.323} & \holdout{0.270}  & \holdout{0.093}  & \holdout{0.105}  & \holdout{0.343}  & \holdout{0.243}  & \holdout{0.168}  &   \holdout{0.107}   
\end{tabular}
\end{table*}

\renewcommand{\b}[1]{#1}
\section{More details on Bayesian computation}
Here we offer more details about the choices in implementing Algorithm~\ref{estimation_algo} in Section 4.1 of the main paper. For convenience we repeat the content of Section 4.1 in the main paper and then expand with our details.

\textbf{Step 1: Model parameters.} 
For simplicity we assume that the models in the two games share common parameters, and thus 
$(\phi_1, \psi_1, \lambda_1) = (\phi_0, \psi_0, \lambda_0)\equiv (\phi, \psi, \lambda)$, where $\lambda$ are the parameters of the behavioral model to be described in Step 8. Having common parameters also acts as regularization and thus helps estimation. We emphasize that this simplification is not necessary as we could have two different set of parameters for each game. It is crucial, however, that the parameters are stable with respect to the treatment assignment because we need to extrapolate from the observed assignment to the counterfactual ones.

\textbf{Step 4: Sampling parameters and initial behaviors} As explained later we assume that there are 3 different behaviors and thus 
$\phi, \psi, \lambda$ are vectors with 3 components. Let $x\sim U(m, M)$ denote that every component 
of $x$ is uniform on $(m, M)$, independently. We choose diffuse priors for our parameters, 
specifically, $\phi\sim \mathrm{U}(0, 10)$, $\psi\sim \mathrm{U}(-5, 5)$, and $\lambda\sim \mathrm{U}(-10, 10)$.
Given $\phi$ we sample the initial behaviors in the two games as $\beta_1(0; Z)\sim \mathrm{Dir}(\phi)$ and $\beta_0(0; Z)\sim\mathrm{Dir}(\phi)$, independently.

Regarding the particular choices of these distributions, we first note that $\phi$ needs to have positive components because it is used as an argument to the Dirichlet distribution. Larger values than 10 could be used for the components of $\phi$ but the implied Dirichlet distributions would not differ significantly than the ones we use in our experiments. Regarding $\lambda$ we note that its components are used in quantities of the form $e^{\lambda[k] u}$ and so it is reasonable to bound them, and the interval $[-5, 5]$ is diffuse enough given the values of $u$ implied by the payoff matrix in Table~\ref{table:game}. Finally the prior for the temporal model parameters, $\psi$, is also diffuse enough. An alternative would be to use a multivariate normal distribution as the prior for $\psi$ but this would not alter the procedure significantly.

\textbf{Steps 5 \& 7: Pivot to counterfactuals.} 
Since we have a completely randomized experiment (A/B test) it holds that $\rho_Z=0.5$ and 
therefore $\beta^{(0)} = 0.5 (\beta_1(0; Z) + \beta_0(0; Z))$.
Now we can pivot to the counterfactual population behaviors under $Z=\ones$ and $Z=\zeroes$ by setting 
$\beta_1(0; \ones) = \beta_0(0; \zeroes) = \beta^{(0)}$.

\textbf{Step 8: Sample counterfactual behavioral history.}
As the temporal model, we adopt
the {\em lag-one vector autoregressive model}, also known as $\mathrm{VAR}(1)$.
We transform\footnote{The map $y = \mathrm{logit}(x)$ is defined as the function $\Delta^m \to \Reals{m-1}$ such that, for vectors 
$y=(y_1, \ldots, y_{m-1})$ and $x=(x_1, \ldots, x_m)$, $\sum_i x_i=1$, and 
$x_1\neq 0$ wlog, indicates that $y_i=\log(x_{i+1}/x_1)$, for $i=1,\ldots, n-1$.} the population behavior into a new variable $w_t = \mathrm{logit}(\beta_1(t; \ones))\in\Reals{2}$ (also do so for $\beta_0(t; \zeroes)$). Such transformation with a unique inverse is necessary because population behaviors are constrained on the simplex, and thus form so-called compositional data~\citep{aitchison1986statistical, grunwald1993time}. The VAR(1) model implies that
\begin{align}
\label{eq:var}
w_t = \psi[1] \ones + \psi[2] w_{t-1} + \psi[3] \epsilon_t,
\end{align}
where $\psi[k]$ is the $k$th component of $\psi$ and $\epsilon_t \sim \mathcal{N}(0, I)$ is i.i.d. standard bivariate normal. Eq.~\eqref{eq:var} is used to sample 
the behavioral history, $B_j$, from $t=0$ to $t=T$, as 
described in Step 8 of Algorithm~\ref{estimation_algo}.
%We note that the VAR model predicts the temporal evolution of counterfactual population behaviors when $Z=\ones$ or $Z=\zeroes$, and not for any $Z$.

Such sampling is straightforward to do. We simply need to sample the random noises $\epsilon_t$ for every $t\in\{0, \ldots, T\}$, and then compute each $w_t$ successively.
Given the sample $\{w_t: t=0, \ldots, T\}$ we can then transform back to calculate the population behaviors
$\beta_1(t; \ones) = \{\mathrm{logit}^{-1}(w_t) : t=0, \ldots, T\}$---for $B_0$ we repeat the same procedure with a new sample of $\epsilon_t$ since the two games share the same temporal model parameters $\psi$.

\textbf{Step 9: Behavioral model.}
Here we rewrite the specifics of the behavioral model with more details.
In \qlk{p} agents possess increasing levels of sophistication.
Following earlier work~\cite{wright2010beyond}, we adopt $p=3$, and
thus consider a behavioral space with three different behaviors
$\Behaviors = \{b_0, b_1, b_2\}$.

Recall that a behavior $\b \in \Behaviors$ represents the distribution
of actions that an agent will play conditional on adopting that behavior. In
\qlk{p}\ such distributions depend on an assumption of \emph{quantal
  response}, which is defined as follows.
Let $u \in \Reals{|\Actions|}$ denote a vector 
such that $u_a$ is the expected utility of an
agent taking action 
$a \in \mathcal{A}$, and let $G_j$ denote the payoff matrix in game $j$
as in Table \ref{table:game}.
If an agent is facing another agent with strategy (distribution over actions) $b$, then $u = G_j b$.
The quantal best-response with parameter $x$ determines the distribution of actions that the agent will take 
facing expected utilities $u$, 
and is defined as 
\begin{align}
\QBR{u}{x} = \mathrm{expit}(x u),\nonumber
\end{align}
where, for a vector $y$ with elements $y_i$, 
$\mathrm{expit}(y)$ is a vector with elements 
$\exp(y_i) / \sum_i \exp(y_i)$.
The parameter $x \ge 0$ is called the \emph{precision} 
of the quantal best-response. If $x$ is very large then the response is closer to the classical Nash best-response, whereas if $x=0$ 
the agent ignores the utilities and randomizes 
among actions. 

Let $\lambda = (\lambda[1], \lambda[2], \lambda[3])$ 
be the precision parameters.
Let $\alpha(b)$ denote the distribution over actions implied for an agent who adopts behavior.
Given $\lambda$ the model \qlk{3} 
calculates $\alpha(b_k)$, for $k=0, 1, 2$, as follows:
\begin{itemize}
\item Agents who adopt $b_0$, termed \emph{level-0} agents,
 have precision $\lambda_0 = 0$, and thus will randomly pick one action from the action space $\mathcal{A}$.
	Thus, 
	$$
	\alpha(b_0) = \QBR{u}{0} =  (1/|\mathcal{A}|) \ones,
	$$
	regardless of the argument $u$.
\item An agent who adopts $b_1$, termed \emph{level-1} agent, has precision $\lambda[1]$ and assumes that is 
playing against a level-0 type agent. Thus,
 the agent is facing a vector of utilities $u_1 = G_j b_0$,
	and so 
	$$
	\alpha(b_1) = \QBR{u_1}{\lambda[1]}.
	$$
\item An agent who adopts $b_2$, termed \emph{level-2} agent,
 has precision $\lambda[3]$ and assumes is playing against a level-1 agent with precision $\lambda[2]$. 
 Thus, it estimates that it is facing 
strategy $\alpha_{(1)2} = \QBR{u_1}{\lambda[2]}$, where $u_1 = G_j b_0$ as above.
	The expected utility vector of the level-2 agent 
	is $u_2 = G_j \alpha_{(1)2}$, and thus 
	$$
    \alpha(b_2) = \QBR{u_2}{\lambda[3]}.
	$$
\end{itemize}

% The behaviors $b_0, b_1, b_2$ depend 
% only on the parameters $\lambda$ and not on the 
% distribution of behaviors $\beta_j(t; Z)$, because in our game 
% an agent plays against only one other agent.
% Now, let $\Pi_j ({\lambda}) = [b_0 \two b_1 \two b_2]$
% be the $|\A| \times 3$ matrix with the 
% \qlk{3} behaviors parametrized by $\lambda$.
% %
% \qlk{3}\ implies that the 
% expected population action is
% \begin{align}
% \ExCond{\alpha_j(t; Z)}{\beta_j(t; Z)} = \Pi_j (\lambda)  \cdot \beta_j(t; Z).
% \end{align}
% Thus for any observed frequency of actions $\alpha$ in a population of $N$ agents we can calculate the likelihood conditional on the population behavior $\beta$ and \ql{3} parameter $\lambda$ through the equation 
% $$
% P(\alpha | \beta, \lambda) = \mathrm{Multinomial}(N\alpha; \Pi_j(\lambda) \beta),
% $$
% where $\mathram{Multi}((n_1, n_K); (p_1, \ldots, p_K)$ is the 
% density function value at point $(n_1, \ldots, n_K)$ of the multinomial distribution with probabilities $(p_1, \ldots, p_K)$.

Given $G_j$ and $\lambda$ we can therefore write down a $5\times 3$ matrix $Q_j = [\alpha(b_0), \alpha(b_1), \alpha(b_2)]$ where the $k$th column is the distribution over actions played by an agent conditional on adopting behavior $b_{k-1}$. 
Conditional on population action $\beta_j(t; Z)$ the expected population action is  $\bar{\alpha}_j(t; Z) = Q_j \beta_j(t; Z)$.  
The population action $\alpha_j(t; Z)$ is distributed as a multinomial with expectation $\bar{\alpha}_j(t; Z)$, and so 
$P(\alpha_j(t; \ones) | \beta_j(t; \ones), G_j) = \mathrm{Multi}(|\Agents| \cdot \alpha_j(t; \ones); \bar{\alpha}_j(t; \ones))$, where $\mathrm{Multi}(n, p)$ is the multinomial density of observations $n=(n_1, \ldots, n_K)$ with expected frequencies $p=(p_1, \ldots, p_K)$. Hence, the full likelihood for observed actions in game $j$ required in Steps 10 and 11 of Algorithm~\ref{estimation_algo} is given by the product
$$
P(\mathcal{D}_j | B_j, \lambda_j, G_j) = \prod_{t=0}^{T-1} \mathrm{Multi}(|\Agents| \cdot \alpha_j(t; j\ones); \bar{\alpha}_j(t; j\ones)).
$$

\end{document}